\documentclass[conference]{IEEEtran}

\usepackage[dvips]{graphicx}
\usepackage{amsmath,amssymb}
\usepackage{algorithm}
\usepackage{algorithmic}

\usepackage{amsthm}
\usepackage{flushend}
\usepackage[english]{babel}
\theoremstyle{remark} 
\newtheorem{oss}{Remark}
\theoremstyle{plain}
\newtheorem{theorem}{Theorem}
\newtheorem{prop}{Proposition}
\newtheorem{lem}{Lemma}
\newtheorem{cor}{Corollary}
\newtheorem {con}{Conjecture}

\begin{document}
\title{On Some Properties of Quadratic APN Functions of a Special Form}
\date{}%date stay empty

\author{
\IEEEauthorblockN{Irene Villa}
\IEEEauthorblockA{Department of Informatics\\ University of Bergen \\ Bergen, Norway \\ Irene.Villa@uib.no\\}
}
\maketitle

\begin{abstract}
In a recent paper \cite{BCL09.2}, it is shown that functions of the form $L_1(x^3)+L_2(x^9)$, where $L_1$ and $L_2$ are linear,
are a good source for construction of new infinite families of APN functions.
In the present work we study necessary and sufficient conditions for such functions to be APN. 
\end{abstract}

\section{Introduction}
For given positive integers $n$ and $m$, a function $F$ from the finite field with $2^n$ elements to the finite field with $2^m$ elements is called a vectorial Boolean function or an ($n,m$)-function, and in the case when $m$=1 it is simply called a Boolean function.
Boolean functions are among the most fundamental objects investigated in pure and applied mathematics and computer science. Boolean function theory is an important tool for solving problems of analysis and synthesis of discrete devices which transform and process information.
The primary motivation for studying Boolean functions comes from cryptography.
In modern society, exchange and storage of information in an efficient, reliable and secure manner is of fundamental importance.
Cryptographic primitives are used to protect information against eavesdropping, unauthorized changes and other misuse.
In the case of symmetric cryptography ciphers are designed by appropriate composition of nonlinear Boolean functions.
For example, the security of block ciphers depends on S-boxes which are ($n,m$)-functions. For most of cryptographic attacks on block ciphers there are certain properties of functions which measure the resistance of the S-box to these attacks. 
The differential attack introduced by Biham and Shamir is one of the most efficient cryptanalysis tools for block ciphers. It is based on the study of how differences in an input can affect the resulting difference at the output.
When $n=m$ the functions that contribute an optimal resistance against differential attack are called \emph{Almost Perfect Nonlinear} (APN). Such APN function $F(x)$ are characterized by having at most two solution for every equation $F(x+a)-F(x)=b$, where $a$ and $b$ are general elements of the field and $a$ is not null.\\
The role of APN functions is not just related to cryptography.
In coding theory APN functions define binary error correcting codes optimal in a certain sense.
In projective geometry quadratic APN functions define dual hyperovals.
Recent advances in APN functions have made a prominent impact on the theory of commutative semifields.\\
For these reasons many works were focused on the studying and the construction of such optimal functions. 

Let assume $n$ be a positive integer and $\mathbb{F}_{2^n}$ the finite field with $2^n$ elements.
If $n$ is an even number, then we have that $3 | (2^n-1)$ and we denote with $k$ the integer value $\frac{2^n-1}{3}$.\\
A function $F$ from $\mathbb{F}_{2^n}$ to $\mathbb{F}_{2^n}$ admits a unique representation, called \emph{Univariate Polynomial Representation}, over $\mathbb{F}_{2^n}$  of degree at most $2^n-1$: $$F(x)=\sum_{j=0}^{2^n-1}\delta_jx^j, \mbox{ with }\delta_j\in\mathbb{F}_{2^n}.$$
For every integer $j$ consider its binary expansion $\sum_{s=0}^{n-1}j_s2^s$ and denote with $w_2(j)$ the number of nonzero coefficients (i.e. $\sum_{s=0}^{n-1}j_s$). 
The \emph{algebraic degree} of the function $F$ is the $\max_{j=0,\ldots,2^n-1/ \delta_j\neq0}w_2(j)$.
Functions of algebraic degree 1 are called \emph{affine} and of degree 2 \emph{quadratic}.
\emph{Linear functions} are affine functions without the constant term and they can be represented as $L(x)=\sum_{j=0}^{n-1}\gamma_jx^{2^j}$.
A known example of a linear function defined over any dimension $n$ is the \emph{Trace function}
$\textit{Tr}(x)=\textit{Tr}_n(x)=\sum_{i=0}^{n-1}x^{2^i},$
In particular the trace is a Boolean function, i.e. $\textit{Tr} : \mathbb{F}_{2^n} \rightarrow \mathbb{F}_2$.
For $m$ positive divisor of $n$ we use the notation $\textit{Tr}^m(x)=\sum_{i=0}^{n/m-1}x^{2^{im}}$.

Given a function $F$ we define its $\lambda$-component as the Boolean function $f_\lambda:\mathbb{F}_{2^n} \rightarrow \mathbb{F}_2$ with $f_\lambda(x)=\textit{Tr}(\lambda\cdot F(x))$, for $\lambda\in\mathbb{F}_{2^n}$.
For a Boolean function $f$ we define the Walsh transformation as $$\hat{f}_\chi (u) =\sum_{x\in\mathbb{F}_{2^n}}(-1)^{f(x)+\textit{Tr}(ux)},$$ with $u\in\mathbb{F}_{2^n}$. With \emph{Walsh Spectrum} we refer to the set of all possible values of the Walsh transformation. 
With the symbol $\mathcal{F}(f)$ we indicate the Walsh transformation valued in 0, $$\mathcal{F}(f)=\sum_{x\in\mathbb{F}_{2^n}}(-1)^{f(x)}= 2^n - 2\cdot\mbox{wt}(f),$$ where wt$(f)$ is the Hamming weight of $f$ (i.e. the cardinality of the set $ \{ x \in\mathbb{F}_{2^n} : f(x) = 1 \} $).
Therefore we have that a Boolean function $f$ is balanced (wt$(f)=2^{n-1}$) if and only if  $\mathcal{F}(f)=0$.
A Boolean function $f$ is called \emph{bent} if its Walsh spectrum corresponds to the set $\{ \pm 2^{n/2} \}$. Therefore such function can exist only for even values of $n$.
 Moreover, we have that $f$ is bent if and only if, for every $a\in\mathbb{F}_{2^n}^*$, the function $D_a f(x)=f(x+a)+f(x)$ is balanced.
 
For every nonzero element $a\in\mathbb{F}_{2^n}^*$ the \emph{derivative} of $F$ in the direction of $a$ is the function  $D_a F(x) = F(x+a)+F(x)$.
The function $F$ is called \emph{almost perfect nonlinear} (APN) if for every $a\neq0$ and every $b$ in $\mathbb{F}_{2^n}$, the equation $D_a F(x)=b$ admits at most 2 solutions.
Used as S-Boxes in block ciphers, APN functions are useful since they oppose an optimal resistance against differential cryptanalysis.
 
 The APN property is invariant under the action of some transformations of functions.
 \begin{itemize}
 \item Given $A_1,A_2$ affine permutations and $A$ an affine function, if $F$ is APN then also $G=A_1 \circ F \circ A_2 +A$ is APN; in this case the functions are called \emph{extended affine equivalent} (EA-equivalent).
 \item Two functions $F$ and $G$ are \emph{CCZ-equivalent} if there exists an affine permutation $\mathcal{L}$ of $\mathbb{F}_{2^n}^2$ such that $\mathcal{L}(\Gamma_F) = \Gamma_G$, where $\Gamma_F$ is the graph of the function $F$, $\{(x,F(x)) : x \in \mathbb{F}_{2^n} \}$. 
 Also CCZ-equivalence preserve the APN property. Moreover,  we have that EA-equivalence is a particular case of CCZ-equivalence.
 \end{itemize}
Many different works have been focused on finding and constructing new families of APN functions.
Table \ref{powerAPN} gives us all known values for exponents $d$ such that the function $x^d$, defined over $\mathbb{F}_{2^n}$, is APN.

\begin{table}[b]
\centering
\label{powerAPN}
\caption{Known APN power functions $x^d$ over $\mathbb{F}_{2^n}$}\label{tab1}\vspace{4pt}
\renewcommand{\arraystretch}{1.2}
\begin{tabular}{|c|c|c|c|c|}
		\hline
 		Functions & Exponents $d$ & Conditions & Degree & Proven  \\
 		\hline
		Golden & $2^i+1$ & gcd($i,n$)=1 & 2 & \cite{86, 130}\\
		\hline
		Kasami & $2^{2i}-2^i+1$ & gcd($i,n$)=1 & $i$+1 & \cite{102,103}\\
		\hline
		Welch & $2^t+3$ & $n=2t+1$ & 3 & \cite{75}\\
		\hline
		Niho & $2^t+2^{\frac{t}{2}}-1$, $t$ even & $n=2t+1$ & $\frac{t+2}{2}$ & \cite{74}\\
		  & $2^t+2^{\frac{3t+1}{2}}-1$, $t$ odd &  & $t$+1 & \\
		\hline
		Inverse & $2^{2t}-1$ & $n=2t+1$ & $n-1$ & \cite{7, 130}\\
		\hline
		Dobbertin & $2^{4i}+2^{3i}+2^{2i}+2^i-1$ & $n=5i$ & $i+3$ &  \cite{76}\\
		\hline
\end{tabular}
\end{table}

Since EA-equivalence preserves the algebraic degree of a function and, in general, the functions listed in Table \ref{powerAPN} have different algebraic degrees, it is easy to verify that these APN functions are EA-inequivalent.
Instead the algebraic degree is not an invariant for CCZ-equivalence.
But also for this case it was possible to prove some inequalities.
In \cite{33} it is shown that two different Gold functions $x^{2^i+1}$ and $x^{2^j+1}$, where $1\leq i<j\leq n/2$, are CCZ-inequivalent and that in general the Gold functions are CCZ-inequivalent to the Welch and to any Kasami functions.
Moreover, the inverse and Dobbertin functions are not CCZ-equivalent to each other and to all other known APN power functions, \cite{33}.
For all the other cases the problem is still open.\\
Before the work in \cite{34} the only known APN functions were EA-equivalent to power functions and it was supposed that all APN functions are EA-equivalent to power functions.
In \cite{34} it is showed the existence of classes of APN mappings EA-inequivalent to power functions. 
Such functions were constructed by applying CCZ-equivalence to the Gold APN mappings. 
In \cite{83} we can find the first examples of APN function CCZ-inequivalent to power functions.
The first infinite families of such APN polynomial can be found in \cite{34}.
In Table II these functions are listed. They are all quadratic functions.

\begin{table*}[h]
\centering
\label{polAPN}
\caption{Known classes of quadratic APN polynomial over $\mathbb{F}_{2^n}$ CCZ-inequivalent to power functions}\label{tab1}\vspace{4pt}
\renewcommand{\arraystretch}{1.4}
\begin{tabular}{|c|c|c| }
		\hline
		Functions &  Conditions &  Proven \\
 		\hline 
		& $n=pk$,  gcd$(k,3)$=gcd($s,3k$)=1, & \\
		$x^{2^s+1}+\alpha^{2^k-1}x^{2^{ik}+2^{mk+s}}$ & $p\in\{3,4\}$, $i=sk$ mod $p$, $m=p-i$,   & \cite{30} \\
		& $n\geq12$, $\alpha$ primitive in $\mathbb{F}_{2^n}^*$   & \\
		\hline
		& $q=2^m$, $n=2m$, gcd($i,m$)=1,   & \\
		$x^{2^{2i}+2^i}+bx^{q+1}+cx^{q(2^{2i}+2^i)}$ & gcd($2^i+1,q+1)\neq1$, $cb^q+b\neq0$,& \cite{26} \\
		& $c\not\in\{\lambda^{(2^i+1)(q-1)}, \lambda\in\mathbb{F}_{2^n}\}$, $c^{q+1}=1$ & \\
		\hline
		& $q=2^m$, $n=2m$, gcd($i,m$)=1,   & \\
		$x(x^{2^i}+x^q+cx^{2^iq}) $ & $c\in\mathbb{F}_{2^n}$, $s\in\mathbb{F}_{2^n}\smallsetminus\mathbb{F}_q$, & \cite{26}\\
		$+x^{2^i}(c^qx^q+sx^{2^iq})+x^{(2^i+1)q}$ & $X^{2^i+1}+cX^{2^i}+c^qX+1$   & \\
		& is irreducible over $\mathbb{F}_{2^n}\}$   & \\
		\hline
		$x^3+a^{-1}\textit{Tr}_n(a^3x^9)$& $a\neq0$ & \cite{BCL09}\\
		\hline
		$x^3+a^{-1}\textit{Tr}_n^3(a^3x^9+a^6x^{18})$& $3|n$, $a\neq0$ & \cite{BCL09.2} \\
		\hline
		$x^3+a^{-1}\textit{Tr}_n^3(a^6x^{18}+a^12x^36)$& $3|n$, $a\neq0$ & \cite{BCL09.2} \\
		\hline
		& $n=3k$, gcd$(k,3)$=gcd($s,3k$)=1,   & \\
		$ux^{2^s+1}+u^{2^k}x^{2^{-k}+2^{k+s}}+$ &  $v,w \in \mathbb{F}_{2^k}$, $vw\neq1$,  & \cite{14} \\
		$vx^{2^{-k}+1}+wu^{2^k+1}x^{2^s+2^{k+s}}$ & $3|(k+s)$ $u$ primitive in $\mathbb{F}_{2^n}^*$   & \\
		\hline
		& $n=2k$, gcd$(s,k)$=1, $s,k$ odd,   & \\
		$\alpha x^{2^s+1}+\alpha^{2^k}x^{2^{k+s}+2^{k}}+$ & $\beta\not\in\mathbb{F}_{2^k}$, $\gamma_i\in\mathbb{F}_{2^k}$, & \cite{13, 14}\\
		$\beta x^{2^{k}+1}+\sum_{i+1}^{k-1}\gamma_ix^{2^{k+i}+2^i}$ &  $\alpha$ not a cube   & \\
		\hline
\end{tabular}
\end{table*}

In this work we focus on functions of the form 
\begin{equation} \label{F'}
F^\prime(x) = F(x^3) = L_1(x^3)+L_2(x^9),
\end{equation}
where $L_1$ and $L_2$ are linear functions.
From now on, we will refer to $L_1$ and $L_2$ as to the linear functions
\begin{equation} \label{L}
L_1(x)=\sum_{i=0}^{n-1}b_ix^{2^i} \mbox{ and } L_2(x)=\sum_{i=0}^{n-1}c_ix^{2^i},
\end{equation}
 with $b_i,c_i\in\mathbb{F}_{2^n}$.
In particular we want to study conditions on $L_1$ and $L_2$ such that $F^\prime$ is APN.

Some results are already been given in different papers.
In \cite{BCL09} the function $x^3+\textit{Tr}(x^9)$ is proved to be APN for any dimension $n$. 
Moreover, for $n\geq 7$ it is proved to be CCZ-inequivalent to the Gold functions, to the inverse and Dobbertin functions and EA-inequivalent to power functions.
For a quadratic APN function $F:\mathbb{F}_{2^n} \rightarrow \mathbb{F}_{2^n}$ and a quadratic boolean function $f:\mathbb{F}_{2^n} \rightarrow \mathbb{F}_2$, under some conditions it is proved that the function $F(x)+f(x)$ is APN.
In particular these conditions are that for every nonzero $a\in\mathbb{F}_{2^n}$ there must exist a linear Boolean function $l_a$ satisfying:
\begin{enumerate}
\item $\varphi_f(x,a)=l_a(\varphi_F(x,a))$,
\item if $\varphi_F(x,a)=1$ for some $x\in\mathbb{F}_{2^n}$ then $l_a(1)=0$,
\end{enumerate}
where $\varphi_\chi(x,a)=\chi(x)+\chi(x+a)+\chi(a)+\chi(0)$.\\
A similar theorem is proved when $f:\mathbb{F}_{2^n}\rightarrow\mathbb{F}_{2^m}$, where $m$ is a divisor of $n$.
Due to this result the following functions, defined over $\mathbb{F}_{2^{2m}}$ where $m$ is an even positive integer, are APN:
\begin{itemize}
\item $x^3+\textit{Tr}_{n/m}(x^{2^m+2})=x^3+x^{2^m+2}+x^{2^{m+1}+1}$,
\item $x^3+(\textit{Tr}_{n/m}(x^{2^m+2}))^3$.
\end{itemize}
 When $F$ is a Gold function, all possible APN mappings $F(x)+f(x)$, where $f$ is a Boolean function, are computed until dimension 15.
The only possibilities, different from $x^3 + \textit{Tr}(x^9)$, are for $n=5$ the function $x^5+\textit{Tr}(x^3)$ (CCZ-equivalent to Gold functions) and for $n=8$ the function $x^9+\textit{Tr}(x^3)$ (CCZ-inequivalent to power functions and to $x^3+\textit{Tr}(x^9)$).

In \cite{BCL09.2} the function $x^3+\textit{Tr}(x^9)$ has been generalized to form (\ref{F'}). It has been proved that for $n$ even a sufficient condition is $L_1(x)+L_2(x^3)$ being a permutation over $\mathbb{F}_{2^n}$.
In the odd dimension a weaker conditions lead to an APN mapping.\\
Moreover from the fact that by applying a linear function $l(x)=ax+b$, with $a\in\mathbb{F}_{2^n}^*$ and $b\in\mathbb{F}_{2^n}$, to a permutation we obtain another one, a simply but useful statement has been proved.
In particular it is stated   that for $n$ even, $L$ a linear function over $\mathbb{F}_{2^n}$, $a\in\mathbb{F}_{2^n}^*$ and $b\in\mathbb{F}_{2^n}$ if $x+L(x^3)$ is  a permutation over $\mathbb{F}_{2^n}$, then
the function 
\begin{equation}\label{equat}
ax^3+L(a^3x^9+a^2bx^6+ab^2x^3)
\end{equation}
is APN over $\mathbb{F}_{2^n}$.\\
The statement just mentioned gives new examples of APN functions in even dimensions.
The following infinite families of function are proved to be APN also in odd dimensions:
\begin{enumerate}
\item $x^3 + a^{-1}\textit{Tr}(a^3x^9)$, with $a\in\mathbb{F}_{2^n}^*$ and any positive $n$;
\item $x^3+a^{-1}\textit{Tr}^3(a^6x^{18}+a^{12}x^{36})$, with $a\in\mathbb{F}_{2^n}^*$ and $n$ divisible by 3;
\item $x^3+a^{-1}\textit{Tr}^3(a^3x^9+a^6x^{18})$, with $a\in\mathbb{F}_{2^n}^*$ and $n$ divisible by 3.
\end{enumerate}
In \cite{BCL09.2} some conditions for constructing permutations, and consequentially for constructing APN functions, are given.
\begin{itemize}
\item For a general positive $n$ and a linear function $L$ over $\mathbb{F}_{2^n}$ if for every $u\in\mathbb{F}_{2^n}$ such that $L(u)\neq0$, the condition 
$$\textit{Tr}_n\bigg(\frac{u}{(L(u))^3}\bigg)=
\begin{cases}
0 & \mbox{ if $n$ is odd}\\
1 & \mbox{ if $n$ is even}
\end{cases}
$$
is satisfied, then the function $x+L(x^3)$ is a permutation.
\item For $n$ even integer and $L$ linear function over $\mathbb{F}_{2^n}$ the function $x+L(x^3)$ is a permutation of $\mathbb{F}_{2^n}$ if and only if for every $b\in\mathbb{F}_{2^n}^*$ such that $L^*(b)\neq0$ there exists an element $\gamma\in\mathbb{F}_{2^n}$ such that $L^*(b)=\gamma^3$ and $\textit{Tr}_n^2(\gamma^{-1}b)\neq0$, where $L^*$ denotes the adjoint linear mapping of $L$.

\item For $n$ odd integer and $L$ linear function over $\mathbb{F}_{2^n}$ the function $x+L(x^3)$ is a permutation of $\mathbb{F}_{2^n}$ if and only if for every $b\in\mathbb{F}_{2^n}$ either $L^*(b)=0$
or $\textit{Tr}_n(\gamma^{-1}b)=0$, where $L^*(b)=\gamma^3$ and $L^*$ denotes the adjoint linear mapping of $L$.
\end{itemize}

The above mentioned function $x^9+\textit{Tr}(x^3)$, for $n=8$, is a clear example of the fact that there are other possible conditions for function of the form  \eqref{1} to be APN.
Therefore with this work we try to find new conditions and new relations for the APN property.

\section{APN Conditions}

\subsection{Necessary and sufficient conditions}
Let $F(x)=L_1(x)+L_2(x^3)$, with $L_1$ and $L_2$ as in (\ref{L}), be a function defined over $\mathbb{F}_{2^n}$ for a positive integer $n$ and $F^\prime(x)=F(x^3)=L_1(x^3)+L_2(x^9)$.\\
Just analysing the APN property for a quadratic function we can state the following lemma.
\begin{lem}\label{1} 
For any positive integer $n$ and any linear functions $L_1$ and $L_2$ of $\mathbb{F}_{2^n}$, a function 
$F^\prime$ defined by (\ref{F'}) is APN if and only if for every $a\in\mathbb{F}_{2^n}^*$ one of the following conditions is satisfied:
\begin{enumerate}
\item   if  $x\neq 0,1$
 \begin{equation} \label{eq:1}
L_1(a^3(x^2+x))+L_2(a^9(x^8+x))\neq 0;
\end{equation}
\item  if   $y\neq 0  \mbox{ and }  \textit{Tr}(y)=0$
 \begin{equation} \label{eq:2}
L_1(a^3y)+L_2(a^9(y^4+y^2+y))\neq 0  .
\end{equation}
\end{enumerate}
\end{lem}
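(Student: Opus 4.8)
The plan is to use the standard description of the APN property for quadratic maps in terms of their associated symmetric bilinear form, and then to unwind that condition for the specific shape $F'(x)=L_1(x^3)+L_2(x^9)$. Since $L_1(0)=L_2(0)=0$ we have $F'(0)=0$, so put $B(z,a):=F'(z+a)+F'(z)+F'(a)$. Because $x^3$ and $x^9$ have algebraic degree $2$, $F'$ has algebraic degree at most $2$, hence $B$ is $\mathbb{F}_2$-bilinear and symmetric in $(z,a)$; a short computation with second-order derivatives gives $D_aF'(x)+D_aF'(y)=B(x+y,a)$. Thus, for a fixed $a\in\mathbb{F}_{2^n}^*$, two inputs $x,y$ are merged by $D_aF'$ exactly when $x+y$ lies in the subspace $V_a:=\{z:B(z,a)=0\}$; the nonempty fibres of $D_aF'$ are the cosets of $V_a$, and $\{0,a\}\subseteq V_a$ always (indeed $B(0,a)=B(a,a)=F'(0)=0$). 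Therefore $F'$ is APN if and only if $V_a=\{0,a\}$ for every $a\in\mathbb{F}_{2^n}^*$, i.e. $B(z,a)\neq0$ whenever $z\notin\{0,a\}$.

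Next I would compute $B$ explicitly. Over $\mathbb{F}_{2^n}$ one has $(z+a)^3=z^3+z^2a+za^2+a^3$ and $(z+a)^9=(z^8+a^8)(z+a)=z^9+z^8a+za^8+a^9$, so the terms that survive separating $z$ from $a$ cancel and
\[
B(z,a)=L_1\big(z^2a+za^2\big)+L_2\big(z^8a+za^8\big)=L_1\big(za(z+a)\big)+L_2\big(za(z^7+a^7)\big).
\]
Substituting $z=ax$ — a bijection of $\mathbb{F}_{2^n}$ for $a\neq0$ that sends $z\notin\{0,a\}$ to $x\notin\{0,1\}$ — and using $za(z+a)=a^3(x^2+x)$ together with $za(z^7+a^7)=a^9x(x^7+1)=a^9(x^8+x)$, we get $B(ax,a)=L_1(a^3(x^2+x))+L_2(a^9(x^8+x))$. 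Plugging this into the criterion of the previous paragraph yields precisely condition (1).

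It remains to see that conditions (1) and (2) are interchangeable, i.e. for each $a$ they express the same requirement. The map $x\mapsto y:=x^2+x$ is $\mathbb{F}_2$-linear with kernel $\{0,1\}=\mathbb{F}_2$, hence two-to-one; since $\textit{Tr}(x^2+x)=0$ its image is contained in $\{y:\textit{Tr}(y)=0\}$, and a cardinality count ($2^{n-1}$ on each side) forces the image to be exactly that hyperplane. Thus $\{x^2+x:x\neq0,1\}=\{y:\textit{Tr}(y)=0,\ y\neq0\}$, and since $(x^2+x)^4+(x^2+x)^2+(x^2+x)=x^8+x$, the change of variable $y=x^2+x$ turns the inequality of (1) into $L_1(a^3y)+L_2(a^9(y^4+y^2+y))\neq0$, which is (2). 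The only mildly delicate points are the bookkeeping that makes $B$ bilinear and yields $D_aF'(x)+D_aF'(y)=B(x+y,a)$, and the identification of the image of $x\mapsto x^2+x$ with the kernel of $\textit{Tr}$; everything else is routine characteristic-$2$ polynomial algebra.
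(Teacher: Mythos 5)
Your proof is correct and follows essentially the same route as the paper: reformulate the APN property for the quadratic map $F'$ with $F'(0)=0$ as ``$F'(ax+a)+F'(ax)+F'(a)=0$ only for $x\in\{0,1\}$,'' expand to get condition (1), then substitute $y=x^2+x$ to get condition (2). You simply fill in two details the paper leaves implicit, namely the bilinear-form justification of that reformulation and the fact that $x\mapsto x^2+x$ maps $\mathbb{F}_{2^n}\setminus\{0,1\}$ onto the nonzero trace-zero elements, which is needed for the direction (2) $\Rightarrow$ (1).
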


\begin{proof}
Since $F^\prime$ is a quadratic function satisfying $F^\prime(0)=0$, APN condition can be reformulated as the following:\\
  for any $a\in\mathbb{F}_{2^n}^*$
$$F^\prime(ax+a)+F^\prime(ax)+F^\prime(a)=0 \mbox{ if and only if } x\in \{0,1\}.$$
The equation above is equivalent to 
$L_1(a^3(x^2+x))+L_2(a^9(x^8+x))=0$,
   therefore we have that 
$$ L_1(a^3(x^2+x))+L_2(a^9(x^8+x))\neq0\mbox{ if and only if } x\neq 0,1.$$
Lets denote now  $y=x^2+x$. Since $x\neq 0,1$ we have that $y\neq0$ and \textit{Tr}$(y)=0$.
The second condition follows easily.
\end{proof}
\begin{prop}
Let $F^\prime$ be APN and, referring to (\ref{L}), construct the linear function $L_3(x)=\sum_{i+0}^{n-1}d_ix^{2^i}$ with coefficients
$$d_0=b_0+b_{n-1}+c_0+c_{n-3}$$
$$d_1=b_1+b_{0}+c_1+c_{n-2}$$
$$d_2=b_2+b_{1}+c_2+c_{n-1}$$
$$d_i=b_i+b_{i-1}+c_i+c_{i-3},  \mbox{ for } 3\leq i\leq n-1.$$
Then $L_3$ is a 2-to-1 map satisfying $L_3(x)=0$ if and only if $x=0,1$.
\end{prop}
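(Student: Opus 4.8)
The plan is to recognise that $L_3$ is nothing other than the polynomial $L_1(x^2+x)+L_2(x^8+x)$ written out in the linearized basis $x,x^2,x^4,\dots,x^{2^{n-1}}$, and then to read off the conclusion directly from Lemma~\ref{1} evaluated at $a=1$.

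First I would expand, using the $\mathbb{F}_2$-linearity of $u\mapsto u^{2^i}$ together with the expressions \eqref{L},
$$L_1(x^2+x)+L_2(x^8+x)=\sum_{i=0}^{n-1}b_i\bigl(x^{2^{i+1}}+x^{2^i}\bigr)+\sum_{i=0}^{n-1}c_i\bigl(x^{2^{i+3}}+x^{2^i}\bigr),$$
and then collect the coefficient of $x^{2^j}$, reducing all exponents modulo $n$ (so that $x^{2^n}=x$). The shift $i\mapsto i+1$ in the $b$-part contributes $b_{j-1\bmod n}$ to the coefficient of $x^{2^j}$, the shift $i\mapsto i+3$ in the $c$-part contributes $c_{j-3\bmod n}$, and the two unshifted sums add the summands $b_j$ and $c_j$. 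Treating the wrap-around indices $j=0,1,2$ separately (where $j-1$ becomes $n-1$ and $j-3$ becomes $n-3,n-2,n-1$) reproduces exactly the stated values $d_0,d_1,d_2$, while for $3\le j\le n-1$ one obtains $d_j=b_j+b_{j-1}+c_j+c_{j-3}$. Hence $L_3(x)=L_1(x^2+x)+L_2(x^8+x)$ as maps on $\mathbb{F}_{2^n}$.

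Next I would determine $\ker L_3$. Plainly $L_3(0)=0$, and $L_3(1)=L_1(1+1)+L_2(1+1)=L_1(0)+L_2(0)=0$, so $\{0,1\}\subseteq\ker L_3$. For the reverse inclusion, apply condition~1 of Lemma~\ref{1} with $a=1$: since $F^\prime$ is APN, $L_1(x^2+x)+L_2(x^8+x)\neq 0$ for every $x\notin\{0,1\}$, i.e. $L_3(x)\neq 0$ there. Therefore $\ker L_3=\{0,1\}$, a two-element $\mathbb{F}_2$-subspace. Since $L_3$ is $\mathbb{F}_2$-linear, each non-empty fibre $L_3^{-1}(t)$ is a coset of $\ker L_3$ and hence has exactly two elements, so $L_3$ is $2$-to-$1$; and $L_3(x)=0$ holds precisely for $x\in\{0,1\}$.

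The only point requiring care is the index bookkeeping in the expansion step — in particular the modular wrap-around that forces $d_0,d_1,d_2$ into their special shape (and, for the smallest admissible values of $n$, makes some of the $c$-indices coincide so that a term cancels). Once the identity $L_3(x)=L_1(x^2+x)+L_2(x^8+x)$ has been established, the statement is an immediate consequence of Lemma~\ref{1}.
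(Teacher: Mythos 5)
Your proposal is correct and follows exactly the paper's own argument: identify $L_3$ with $L_1(x^2+x)+L_2(x^8+x)$ by collecting coefficients of $x^{2^j}$ with the modular wrap-around, then invoke condition 1 of Lemma~\ref{1} at $a=1$ to get $\ker L_3=\{0,1\}$. Your explicit coset argument for the 2-to-1 property is a small addition the paper leaves implicit, but the route is the same.
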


\begin{proof}
Using equation (\ref{eq:1}) with $a=1$, consider the following map: $L_1(x^2+x)+L_2(x^8+x)$.
Analysing the two linear functions we have:
$L_1(x^2+x)=(b_0+b_{n-1})x+\sum_{i=1}^{n-1}(b_i+b_{i-1})x^{2^i}$,
$L_2(x^8+x)=(c_0+c_{n-3})x+(c_1+c_{n-2})x^2+(c_2+c_{n-1})x^{2^2}+\sum_{i=3}^{n-1}(c_i+c_{i-3})x^{2^i}$.\\
Therefore $L_1(x^2+x)+L_2(x^8+x)$ corresponds to the linear function $L_3(x)$  described above.
From Lemma \ref{1} we have that $L_3(x)=0$ if and only if $x=0,1$.
\end{proof}

The following lemma gives a quite fast way to verify if a function $F^\prime$ can be APN, since you have to evaluate it over a third of the elements of the space.

\begin{lem} \label{F'(a)ne0}
For $n$ even assume $F^\prime$ is APN. Let $\alpha\in\mathbb{F}_{2^n}^*$ be a primitive element and $k=\frac{2^n-1}{3}$.
 Then $F^\prime(a)\neq0$ for any $a\neq0$ or equivalently  $F(\alpha^{3j})=F^\prime(\alpha^j)\neq0$ for $0\leq j\leq k-1$.
\end{lem}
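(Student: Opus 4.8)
The plan is to argue by contradiction: suppose $F'$ is APN but $F'(a)=0$ for some $a\neq0$. Since $F'(x)=F(x^3)$ and $n$ is even, the map $x\mapsto x^3$ is $3$-to-$1$ on $\mathbb{F}_{2^n}^*$ with image exactly the set of nonzero cubes $\{\alpha^{3j}:0\le j\le k-1\}$, so the two phrasings ``$F'(a)\neq0$ for all $a\neq0$'' and ``$F(\alpha^{3j})\neq0$ for $0\le j\le k-1$'' are clearly equivalent; it suffices to rule out $F'(a)=0$. Writing $F'(a)=L_1(a^3)+L_2(a^9)$, I would feed this vanishing into the APN criterion of Lemma~\ref{1}. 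The natural move is to take the direction $a$ itself and exploit condition~\eqref{eq:1}: for $x\neq0,1$ we must have $L_1(a^3(x^2+x))+L_2(a^9(x^8+x))\neq0$. I want to produce a specific $x\notin\{0,1\}$ for which the left-hand side collapses to a multiple of $F'(a)=L_1(a^3)+L_2(a^9)$, which would then force $F'(a)\neq0$.

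The key step is choosing $x$ wisely. Try $x$ with $x^2+x=1$, i.e.\ $x$ a primitive cube root of unity $\omega$ (which exists precisely because $n$ is even, so $3\mid 2^n-1$); then $\omega\neq0,1$. For this $x$ we get $x^8+x=\omega^8+\omega=\omega^2+\omega=1$ as well, since $\omega^3=1$ gives $\omega^8=\omega^2$. Hence \eqref{eq:1} with this choice reads
\[
L_1(a^3\cdot 1)+L_2(a^9\cdot 1)=L_1(a^3)+L_2(a^9)=F'(a)\neq0,
\]
which is exactly the desired conclusion. So the whole statement follows from Lemma~\ref{1} applied in the direction $a$ with the evaluation point $x=\omega$, together with the elementary observation $\omega^2+\omega=\omega^8+\omega=1$.

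The only subtlety — and the place I expect to have to be careful — is the legitimacy of plugging $x=\omega$ into \eqref{eq:1}: Lemma~\ref{1} says that for each $a\neq0$ \emph{one} of the two listed conditions holds, not that \eqref{eq:1} holds for all $x\neq0,1$. Re-reading the proof of Lemma~\ref{1}, though, condition~\eqref{eq:1} is established for \emph{every} $x\neq0,1$ (the quadratic APN reformulation gives $L_1(a^3(x^2+x))+L_2(a^9(x^8+x))=0 \iff x\in\{0,1\}$), so the substitution is valid; condition~\eqref{eq:2} is merely the same statement repackaged via $y=x^2+x$. Thus the argument goes through, and the evenness of $n$ enters exactly once, to guarantee the existence of the cube root of unity $\omega$ — which is why the lemma is stated for $n$ even.
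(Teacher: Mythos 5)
Your proof is correct and is essentially the paper's own argument in a different variable: the paper applies condition \eqref{eq:2} with $y=1$ (valid since $\textit{Tr}(1)=0$ for $n$ even), while you apply condition \eqref{eq:1} with $x=\omega$, and under the substitution $y=x^2+x$ your choice $x=\omega$ gives exactly $y=\omega^2+\omega=1$. Your closing remarks on the validity of substituting any $x\neq 0,1$ into \eqref{eq:1} and on the $3$-to-$1$ nature of cubing are accurate and match the paper's reasoning.
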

\begin{proof}
For $n$ even we have \textit{Tr}$(1)=0$. Therefore using equation (\ref{eq:2}) with $y=1$ we get for any $a\neq0$
$$ L_1(a^3)+L_2(a^9)=F(a^3)=F^\prime(a)\neq0.$$
For $a\neq0$ we have that $a=\alpha^j$ with $0\leq j\leq 2^n-2$.
Since we consider just cubic power of $a$, we can restrict the possibilities to $0\leq j\leq k-1$. This concludes the proof.
\end{proof}

\begin{oss}
If we consider $j=0$ in Lemma \ref{F'(a)ne0} then $$L_1(1)+L_2(1)=\sum_{i=0}^{n-1}b_i+\sum_{i=0}^{n-1}c_i=\sum_{i=0}^{n-1}(b_i+c_i)\neq0.$$
Moreover, if we just consider linear functions defined over $\mathbb{F}_2$ (i.e. $b_i,c_i\in\mathbb{F}_{2}$) then a fast way to check if $F^\prime$ is not APN is by verifying that $L_1$ and $L_2$ have the same parity number of monomials.
\end{oss}

\begin{lem}
Let $n$ be an even number multiple of 3 and $F^\prime$ be APN. Then for any $a\neq0$ $L_1(a^3\beta)\neq 0$, with $\beta\in\mathbb{F}_{2^3}^*$ such that $\textit{Tr}_3(\beta)=0$.
\end{lem}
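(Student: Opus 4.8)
The plan is to read off the result as a direct specialization of Lemma~\ref{1}, using its second formulation~(\ref{eq:2}). Since $3\mid n$, the subfield $\mathbb{F}_{2^3}$ lies inside $\mathbb{F}_{2^n}$, so it is legitimate to substitute an element $\beta\in\mathbb{F}_{2^3}^*$ with $\textit{Tr}_3(\beta)=0$ for the variable $y$ in~(\ref{eq:2}); the whole point is that such a $\beta$ annihilates the $L_2$-term.

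First I would check that $y=\beta$ satisfies the hypotheses of condition~(\ref{eq:2}): we need $\beta\neq0$ (true by assumption) and $\textit{Tr}_n(\beta)=0$. The latter follows by grouping the $n$ conjugates $\beta^{2^i}$, $0\le i\le n-1$, into $n/3$ consecutive triples: because $\beta^{2^3}=\beta$, each triple sums to $\beta+\beta^2+\beta^4=\textit{Tr}_3(\beta)=0$, hence $\textit{Tr}_n(\beta)=0$. (Equivalently, by transitivity of the trace $\textit{Tr}_n=\textit{Tr}_3\circ\textit{Tr}^3_n$ together with $\textit{Tr}^3_n(\beta)=(n/3)\beta$.)

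The key identity is $\beta^4+\beta^2+\beta=\textit{Tr}_3(\beta)$, valid for every $\beta\in\mathbb{F}_{2^3}$, which here equals $0$. Since $F^\prime$ is APN, condition~(\ref{eq:2}) holds for every $a\in\mathbb{F}_{2^n}^*$; evaluating it at $y=\beta$ gives
\[
L_1(a^3\beta)+L_2\big(a^9(\beta^4+\beta^2+\beta)\big)=L_1(a^3\beta)+L_2(0)=L_1(a^3\beta)\neq0,
\]
which is precisely the claim.

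I do not anticipate any real obstacle: the statement is a plug-in consequence of Lemma~\ref{1}, the only inputs being the two elementary trace facts above. As a sanity check one may instead argue through~(\ref{eq:1}): choose $x$ with $x^2+x=\beta$ (possible since $\textit{Tr}_n(\beta)=0$, and necessarily $x\neq0,1$ because $\beta\neq0$); then $x^8+x=(x^2+x)^4+(x^2+x)^2+(x^2+x)=\beta^4+\beta^2+\beta=0$, so~(\ref{eq:1}) again collapses to $L_1(a^3\beta)\neq0$. Note that evenness of $n$ is not actually used in this argument---$3\mid n$ already suffices---so that hypothesis is kept only to match the setting of the surrounding lemmas.
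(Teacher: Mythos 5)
Your proof is correct and follows essentially the same route as the paper: verify $\textit{Tr}_n(\beta)=0$ by grouping conjugates into $n/3$ copies of $\textit{Tr}_3(\beta)$, then specialize condition~(\ref{eq:2}) at $y=\beta$ so that $\beta^4+\beta^2+\beta=\textit{Tr}_3(\beta)=0$ kills the $L_2$-term. Your added observations (the alternative check via~(\ref{eq:1}) and the fact that only $3\mid n$ is really needed) are accurate but do not change the argument.
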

\begin{proof}
Consider such an element $\beta$ and call $m$ the integer $\frac{n}{3}$.
We have that $\textit{Tr}_n(\beta)$ is equal to $\sum_{j=1}^m\sum_{i=0}^2\beta^{2^i}=\sum_{j=1}^m\textit{Tr}_3(\beta)=0$.
Therefore we can apply (\ref{eq:2}) with $y=\beta$ and obtain
$$L_1(a^3\beta)+L_2(a^9(\beta^4+\beta^2+\beta))=L_1(a^3\beta)\neq 0 \ \forall a\neq0.$$
\end{proof}

\begin{lem} \label{APNiif}
Consider a function $F^\prime$ from $\mathbb{F}_{2^n}$ to itself defined as in \eqref{F'}.
$F^\prime$ is APN if and only if it satisfies the following condition:

for every $a,y\neq0$  with \textit{Tr}($y$)=0 if an element $t\in\mathbb{F}_{2^n}$ satisfies $\textit{Tr} (t$)=0 and
$$L_1(a^3y)=L_2(a^9y^3t)$$ then  $L_2(a^9(y^4+ty^3+y^2+y))\neq0$.

\end{lem}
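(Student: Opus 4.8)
The plan is to reduce the statement to condition (\ref{eq:2}) of Lemma \ref{1}, namely that $F^\prime$ is APN if and only if $L_1(a^3y)+L_2(a^9(y^4+y^2+y))\neq 0$ for every $a\neq 0$ and every $y\neq 0$ with $\textit{Tr}(y)=0$. The bridge between the two formulations is the additivity of the linear map $L_2$: for all $a,y,t\in\mathbb{F}_{2^n}$,
$$L_2\big(a^9(y^4+ty^3+y^2+y)\big)=L_2\big(a^9(y^4+y^2+y)\big)+L_2\big(a^9y^3t\big).$$
Consequently, whenever $t$ satisfies $L_1(a^3y)=L_2(a^9y^3t)$, the left-hand side becomes $L_1(a^3y)+L_2(a^9(y^4+y^2+y))$, i.e.\ exactly the expression occurring in (\ref{eq:2}). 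So under the stated hypotheses on $t$ the conclusion ``$L_2(a^9(y^4+ty^3+y^2+y))\neq 0$'' is literally the inequality in (\ref{eq:2}) for that pair $(a,y)$.

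The ``only if'' direction is then immediate. Assume $F^\prime$ is APN, fix admissible $a,y$ and any $t$ with $\textit{Tr}(t)=0$ and $L_1(a^3y)=L_2(a^9y^3t)$; by the displayed identity $L_2(a^9(y^4+ty^3+y^2+y))=L_1(a^3y)+L_2(a^9(y^4+y^2+y))$, which is nonzero by (\ref{eq:2}).

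For the converse I argue by contraposition. If $F^\prime$ is not APN, Lemma \ref{1} gives $a\neq 0$ and $y\neq 0$ with $\textit{Tr}(y)=0$ and $L_1(a^3y)+L_2(a^9(y^4+y^2+y))=0$, that is $L_1(a^3y)=L_2(a^9(y^4+y^2+y))$. The key point is that for this particular pair one can exhibit a qualifying $t$, namely $t=y+y^{-1}+y^{-2}$ (well defined since $y\neq 0$). Indeed $a^9y^3t=a^9(y^4+y^2+y)$, hence $L_2(a^9y^3t)=L_2(a^9(y^4+y^2+y))=L_1(a^3y)$, so the second hypothesis holds; and since the trace is invariant under squaring, $\textit{Tr}(t)=\textit{Tr}(y)+\textit{Tr}(y^{-1})+\textit{Tr}(y^{-1})=\textit{Tr}(y)=0$, so the first hypothesis holds. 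Plugging this $t$ into the identity yields $L_2(a^9(y^4+ty^3+y^2+y))=L_1(a^3y)+L_1(a^3y)=0$, so the asserted condition fails at $(a,y,t)$. This proves the equivalence.

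I do not anticipate a serious obstacle: the computation is short once the additivity identity is isolated. The one step that needs care is the converse — it is \emph{not} true that an arbitrary admissible pair $(a,y)$ admits a $t$ with $L_1(a^3y)=L_2(a^9y^3t)$, since $t\mapsto L_2(a^9y^3t)$ need not be surjective; but precisely the pairs witnessing the failure of (\ref{eq:2}) do, and the explicit choice $t=y+y^{-1}+y^{-2}$ realises such a witness while forcing $\textit{Tr}(t)=\textit{Tr}(y)=0$ automatically, which is what makes the two conditions line up.
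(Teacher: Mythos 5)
Your proof is correct and follows essentially the same route as the paper's: both reduce the claim to condition (\ref{eq:2}) of Lemma \ref{1}, use the additivity of $L_2$ to rewrite $L_2(a^9(y^4+ty^3+y^2+y))$ as $L_1(a^3y)+L_2(a^9(y^4+y^2+y))$ when $t$ qualifies, and exploit the same trace-zero element $y+y^{-1}+y^{-2}$ satisfying $y^3(y+y^{-1}+y^{-2})=y^4+y^2+y$. The only difference is organizational: the paper handles the converse by splitting into cases according to whether a qualifying $t$ exists, whereas you argue by contraposition and produce that element as an explicit witness of the failure of the condition.
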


\begin{proof}
By Lemma \ref{1} we have that APN property for $F^\prime$ is equivalent to
$$\mbox{for any } a,y\in\mathbb{F}_{2^n}^*, \textit{Tr}(y)=0 \ L_1(a^3y)+L_2(a^9(y^4+y^2+y))\neq0.$$
Assume that there exists an element $t$ that satisfies the conditions in the statement.
Let us re-write the formula above as\\
$\begin{array} {cl} 
0\neq & L_1(a^3y)+L_2(a^9(y^4+y^2+y))=\\
&L_1(a^3y)+L_2(a^9y^3t)+L_2(a^9(y^4+ty^3+y^2+y))=\\
&L_2(a^9(y^4+ty^3+y^2+y)).
\end{array}$

Therefore the APN condition is equivalent to $$L_2(a^9(y^4+ty^3+y^2+y))\neq0.$$
On the other hand assume that for any $t$ of null trace we have  $L_1(a^3y)\neq L_2(a^9y^3t)$.
Therefore \begin{center}
$L_1(a^3y)\not\in \Omega=\{ L_2(a^9y^3t) : \textit{Tr}(t)=0 \}$.
\end{center}
Let us consider the second term of the formula,\begin{center}
 $L_2(a^9(y^4+y^2+y))=L_2(a^9y^3(y+1/y+1/y^2))$. 
\end{center}
Since $\textit{Tr}(y+1/y+1/y^2)$=0, the term belongs to the set $\Omega$.
Therefore the relation is again respected.
\end{proof}

\begin{cor}
For general $a\neq0$ and $y\neq0$ with \textit{Tr}($y$)=0, if the equation $$L_1(a^3y)=L_2(a^9y^3t)$$
is satisfied only for $t$ with \textit{Tr}($t$)=1, then the function $F^\prime(x)=F(x^3)=L_1(x^3)+L_2(x^9)$ is APN. 
\end{cor}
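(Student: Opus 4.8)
The plan is to obtain this corollary as an immediate consequence of Lemma \ref{APNiif}, by noticing that its hypothesis makes the premise of the implication in that lemma impossible to fulfil. First I would recall the characterization just proved: $F^\prime$ is APN if and only if, for every $a,y\in\mathbb{F}_{2^n}^*$ with $\textit{Tr}(y)=0$, any $t\in\mathbb{F}_{2^n}$ with $\textit{Tr}(t)=0$ satisfying $L_1(a^3y)=L_2(a^9y^3t)$ also satisfies $L_2(a^9(y^4+ty^3+y^2+y))\neq0$.

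Next I would translate the assumption of the corollary. Saying that, for all admissible $a$ and $y$, the equation $L_1(a^3y)=L_2(a^9y^3t)$ is satisfied only for $t$ with $\textit{Tr}(t)=1$ is precisely saying that there is no $t$ of null trace solving it (the case where the equation has no solution at all for a given pair $(a,y)$ is subsumed here, since then there is vacuously no zero-trace solution). Hence the condition ``$t$ satisfies $\textit{Tr}(t)=0$ and $L_1(a^3y)=L_2(a^9y^3t)$'' appearing in Lemma \ref{APNiif} is never met, so the implication in Lemma \ref{APNiif} holds vacuously for every choice of $a$ and $y$. By Lemma \ref{APNiif}, $F^\prime$ is therefore APN.

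The only delicate point—more a matter of careful reading than a genuine obstacle—is to confirm that ``satisfied only for $t$ with $\textit{Tr}(t)=1$'' is exactly the negation of ``there exists $t$ with $\textit{Tr}(t)=0$ solving the equation,'' covering also the degenerate no-solution case; once this is fixed, the argument closes with no computation involving $L_1$ or $L_2$ at all.
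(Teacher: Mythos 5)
Your argument is correct and is essentially the paper's own proof: both observe that the hypothesis makes the premise of the implication in Lemma \ref{APNiif} impossible to satisfy, so the lemma's condition holds vacuously and $F^\prime$ is APN. Your extra remark that the no-solution case is subsumed is a harmless clarification, not a deviation.
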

\begin{proof}
In this case the hypothesis of the previous lemma is always satisfied, since there is no element $t$ such that $L_1(a^3y)=L_2(a^9y^3t)$ and $\textit{Tr}(t)=0$.
Therefore the function $F^\prime$ is APN.
\end{proof}

\subsection{On APN functions of the form $x^9+L(x^3)$}
From \cite{BCL09} we know that in $\mathbb{F}_{2^8}$ the function $F^\prime(x)=x^9+\textit{Tr}(x^3)$ is APN.
\begin{lem}
If $3|n$ then the function $x^9+\textit{Tr}(x^3)$ is not  APN over $\mathbb{F}_{2^n}$.
\end{lem}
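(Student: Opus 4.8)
The plan is to invoke Lemma~\ref{1} with the particular choice $L_1=\textit{Tr}$ and $L_2=\mathrm{id}$, which is precisely the representation of $F^\prime(x)=x^9+\textit{Tr}(x^3)$ in the form \eqref{F'}. By that lemma, $F^\prime$ is APN if and only if for every $a\in\mathbb{F}_{2^n}^*$ and every $y\neq 0$ with $\textit{Tr}(y)=0$ one has $\textit{Tr}(a^3y)+a^9(y^4+y^2+y)\neq 0$. Hence, to prove that $F^\prime$ is \emph{not} APN it suffices to exhibit a single admissible pair $(a,y)$ for which the left-hand side of \eqref{eq:2} vanishes.

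I would take $a=1$, reducing the target to finding $y\neq 0$ with $\textit{Tr}_n(y)=0$ and $\textit{Tr}_n(y)+(y^4+y^2+y)=0$, i.e. with $y^4+y^2+y=0$. Using the factorization $y^4+y^2+y=y(y^3+y+1)$, this leads to choosing $y$ to be a root of $Y^3+Y+1$. That polynomial has no root in $\mathbb{F}_2$, hence is irreducible over $\mathbb{F}_2$, so its roots lie in $\mathbb{F}_{2^3}$; since $3\mid n$ we have $\mathbb{F}_{2^3}\subseteq\mathbb{F}_{2^n}$, so such a root $y\in\mathbb{F}_{2^n}^*$ exists, and for it $y^4+y^2+y=0$ by construction.

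It then remains to verify $\textit{Tr}_n(y)=0$. Because $y\in\mathbb{F}_{2^3}$ we have $y^{2^3}=y$, so grouping the $n=3\cdot(n/3)$ summands of $\textit{Tr}_n(y)=\sum_{i=0}^{n-1}y^{2^i}$ into $n/3$ consecutive triples yields $\textit{Tr}_n(y)=\tfrac{n}{3}\,(y+y^2+y^4)=\tfrac{n}{3}\,\textit{Tr}_3(y)$ in $\mathbb{F}_2$, and $\textit{Tr}_3(y)=y+y^2+y^4=y(y^3+y+1)=0$ (equivalently, the coefficient of $Y^2$ in the minimal polynomial $Y^3+Y+1$ is $0$). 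Thus, with $a=1$, the left-hand side of \eqref{eq:2} equals $\textit{Tr}_n(y)+(y^4+y^2+y)=0+0=0$, which contradicts the criterion of Lemma~\ref{1}; therefore $F^\prime$ is not APN over $\mathbb{F}_{2^n}$. I do not expect a genuine obstacle here: the only points requiring care are that the chosen $y$ actually belongs to $\mathbb{F}_{2^n}$ (this is where the hypothesis $3\mid n$ enters) and that the two conditions $\textit{Tr}_n(y)=0$ and $y^4+y^2+y=0$ hold simultaneously — both of which follow immediately from $y^3+y+1=0$, since the nonzero elements of $\mathbb{F}_{2^3}$ of zero trace are exactly the roots of $Y^3+Y+1$.
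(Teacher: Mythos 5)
Your proof is correct and takes essentially the same route as the paper: both invoke Lemma~\ref{1} with $a=1$ and a witness drawn from the subfield $\mathbb{F}_{2^3}$, which exists precisely because $3\mid n$. The paper uses formulation \eqref{eq:1} with $x\in\mathbb{F}_{2^3}\setminus\mathbb{F}_2$, whereas you use the equivalent formulation \eqref{eq:2} with $y$ a root of $Y^3+Y+1$ (i.e.\ $y=x^2+x$ for such an $x$), so the two arguments exhibit the same counterexample.
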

\begin{proof}
From Lemma \ref{1} we have that $x^9+\textit{Tr}(x^3)$ is APN if and only if for any $a\neq0$ and any $x\neq0,1$
$$\textit{Tr}(a^3(x^2+x))+a^9(x^8+x)\neq0.$$
If we now consider $n$ multiple of 3, $x\in\mathbb{F}_{2^3}\setminus\mathbb{F}_2$ and $a=1$ we obtain
$$\textit{Tr}(a^3(x^2+x))+a^9(x^8+x)=0.$$
\end{proof}

Using Lemma \ref{APNiif} it was possible to implement, using the software MAGMA, a fast algorithm that checks if $x^9+\textit{Tr}(x^3)$ is APN over $\mathbb{F}_{2^n}$.
Running the code for $n$ up to 200  the only APN functions are for dimensions 4, 5 and 8.

Let us consider now a more general form for $F^\prime$, $G(x)=x^9+L(x^3)$ with $L$ linear function in $\mathbb{F}_{2^n}[x]$.

With some computational work, done with MAGMA, we tried to find more APN functions of this form in other dimensions. 
In Table \ref{tab1} we summarize the results we obtained.
With $\alpha$ we indicate a primitive element of $\mathbb{F}_{2^n}^*$.
 We searched for APN functions in $\mathbb{F}_{2^n}$, up to $n=10$, of the form $x^9+L(x^3)$.
We studied their CCZ-equivalence relation and obtained the representatives for each dimension.
Obviously for every $n$ not multiple of 3, the Gold function $x^9$ (corresponding to the case $L(x)=0$) is APN.

\begin{table}[h]
\centering
\label{tab1}
\caption{APN functions of the form $x^9+L(x^3)$ over $\mathbb{F}_{2^n}$}\label{tab1}\vspace{4pt}
\renewcommand{\arraystretch}{1.4}
\begin{tabular}{c||  c|  c|} 
$n$ &  CCZ-eq & Representatives\\
\hline\hline
4 &  1  & $L(x)=0$\\\hline
5 &  2 & $L(x)=0$, $L(x)=\textit{Tr}(x)$\\\hline
6 &  2  & $L(x)=\alpha^{44}x+\alpha x^2$, \\
&& $L(x)=\alpha^{23}x+x^{2^2}$\\\hline
7  & 1 & $L(x)=0$\\ \hline 
8   & 6 & $L(x)=0$, $L(x)=x^2+x^{2^4}$,\\
    &   & $L(x)=x^{2^3}+x^{2^7}$, $L(x)=\textit{Tr}(x)$, \\
    &  & $L(x)=x^{2^2}+\alpha^{85}x^{2^3}+x^{2^4}$, \\
    & & $L(x)=\alpha^{60}x+ \alpha^{200}x^2+ \alpha^{242}x^4+ \alpha^{190}x^8   + \alpha x^{16}   $\\ \hline
9  & 0  & -\\\hline
10  & 2 & $L(x)=0$,\\
&& $L(x)=\alpha^{1021}x+\alpha^{1022}x^2+\alpha x^{2^2} $\\\hline

\end{tabular}
\end{table}

For greater dimensions we just checked the possible APN function of the form $x^9+L(x^3)$ with $L\in\mathbb{F}_2[x]$, up to CCZ-equivalence.
\begin{itemize}
\item for $n=11$ there are no APN except $F(x)=x^9$;
\item for $n=12$ there are no APN;
\item for $n=13$ there are no APN except $F(x)=x^9$.
\item for $n=14$ there are no APN except $F(x)=x^9$;
\item for $n=15$ there are no APN;
\item for $n=16$ there are no APN except $F(x)=x^9$.
\end{itemize} 

For $n=4$ the function $x^9+\textit{Tr}(x^3)$ is CCZ-equivalent to the Gold function $x^9$.\\
For $n=6$ the found APN functions are not CCZ-equivalent to functions $x^9+L(x^3)$ defined over $\mathbb{F}_2$.
Moreover we get:
\begin{itemize}
\item  for $L(x)=\alpha^{44}x+\alpha x^2$ the function $x^9+L(x^3)$  is CCZ-equivalent to $x^3+\alpha^{-1}\textit{Tr}_n(\alpha^3x^9)$;
\item for $L(x)=\alpha^{23}x+x^{2^2}$ the function $x^9+L(x^3)$ is CCZ-equivalent to $x^3=x^3+\textit{Tr}_n(x^9)$.
\end{itemize}
Both of these functions belong to the class of APN functions studied in \cite{BCL09.2}\\
For $n=8$ we compared the found APN mappings with the list of known APN function in dimension 8 in \cite{EP08}.
We get the following:
\begin{itemize}
\item for $L(x)=x^2+x^{2^4}$ the function $x^9+L(x^3)$ is CCZ-equivalent to $x^3+\textit{Tr}(x^9)$;
\item for $L(x)=x^{2^3}+x^{2^7}$ the function $x^9+L(x^3)$ is CCZ-equivalent to $x^3$;
\item for $L(x)=x^{2^2}+\alpha^{85}x^{2^3}+x^{2^4}$ the function $x^9+L(x^3)$ is not CCZ-equivalent to any function of the form $x^3+a^{-1}\textit{Tr}(a^3x^9)$ but it is CCZ-equivalent to function $\alpha^{135}x^{144} + \alpha^{120}x^{66} + \alpha^{65}x^{18} + x^3 $, no. 6 in the list of APN mapping of $\mathbb{F}_{2^8}$ in \cite{EP08};
\item for $L(x)=\alpha^{60}x+ \alpha^{200}x^2+ \alpha^{242}x^4+ \alpha^{190}x^8   + \alpha x^{16} $ the function $x^9+L(x^3)$ is not CCZ-equivalent to any function of the form $x^3+a^{-1}\textit{Tr}(a^3x^9)$ but it
 is CCZ-equivalent to function $\alpha^{242}x^{192} + \alpha^{100}x^{144} + \alpha^{66}x^{132} + \alpha^{230}x^{129} + \alpha^{202}x^{96} + \alpha^{156}x^{72} + \alpha^{254}x^{66} + \alpha^{18}x^{48} 
 + \alpha^{44}x^{36} + \alpha^{95}x^{33} + \alpha^{100}x^{24}  + \alpha^{245}x^{18} + \alpha^{174}x^{12} + \alpha^{175}x^9   + \alpha^{247}x^6 +    \alpha^{166}x^3$, 
   no. 9 in the list of APN mapping of $\mathbb{F}_{2^8}$ in \cite{EP08}

\end{itemize}

\subsection{On the number of bent components}
From \cite{BCCL06} we get the following theorem.
\begin{theorem}
Let $F$ be a function from $\mathbb{F}_{2^n}$ to $\mathbb{F}_{2^n}$.
Then for any non-zero $a\in\mathbb{F}_{2^n}$ $$\sum_{\lambda\in\mathbb{F}_{2^n}}\mathcal{F}^2(D_af_\lambda)\geq 2^{2n+1}.$$
Moreover, $F$ is APN if and only if for every non-zero $a\in\mathbb{F}_{2^n}$ $$\sum_{\lambda\in\mathbb{F}_{2^n}}\mathcal{F}^2(D_af_\lambda)=2^{2n+1}.$$   
\end{theorem}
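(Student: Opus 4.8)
The plan is to expand the square $\mathcal{F}^2(D_af_\lambda)=\big(\mathcal{F}(D_af_\lambda)\big)^2$ as a double sum over $\mathbb{F}_{2^n}\times\mathbb{F}_{2^n}$, interchange it with the outer sum over $\lambda$, and apply the orthogonality relation $\sum_{\lambda\in\mathbb{F}_{2^n}}(-1)^{\textit{Tr}(\lambda c)}=2^n$ if $c=0$ and $0$ otherwise. Fix $a\in\mathbb{F}_{2^n}^*$. Since $D_af_\lambda(x)=\textit{Tr}\big(\lambda(F(x+a)+F(x))\big)=\textit{Tr}(\lambda D_aF(x))$, we have
$$\mathcal{F}^2(D_af_\lambda)=\sum_{x,y\in\mathbb{F}_{2^n}}(-1)^{\textit{Tr}\left(\lambda\left(D_aF(x)+D_aF(y)\right)\right)},$$
and summing over $\lambda$ and using orthogonality gives
$$\sum_{\lambda\in\mathbb{F}_{2^n}}\mathcal{F}^2(D_af_\lambda)=2^n\cdot N(a),\qquad N(a)=\#\{(x,y)\in\mathbb{F}_{2^n}^2:D_aF(x)=D_aF(y)\}.$$

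Next I would set $n_b=\#\{x\in\mathbb{F}_{2^n}:D_aF(x)=b\}$, so that $\sum_{b\in\mathbb{F}_{2^n}}n_b=2^n$ and $N(a)=\sum_{b\in\mathbb{F}_{2^n}}n_b^2$. The key structural observation is that in characteristic $2$ one has $D_aF(x+a)=F(x)+F(x+a)=D_aF(x)$ while $x\neq x+a$, so each solution set $\{x:D_aF(x)=b\}$ is a disjoint union of pairs $\{x,x+a\}$; hence every $n_b$ is even. Consequently $n_b^2\ge 2n_b$ for every $b$, and therefore $N(a)=\sum_b n_b^2\ge\sum_b 2n_b=2^{n+1}$, which yields $\sum_{\lambda}\mathcal{F}^2(D_af_\lambda)=2^n N(a)\ge 2^{2n+1}$, the first assertion.

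For the second assertion, equality $N(a)=2^{n+1}$ holds if and only if $n_b^2=2n_b$ for every $b$, i.e. $n_b\in\{0,2\}$ for all $b$, which is exactly the condition that $D_aF(x)=b$ has at most two solutions for every $b\in\mathbb{F}_{2^n}$. Imposing this for every $a\neq0$ is precisely the definition of an APN function, so $F$ is APN if and only if $\sum_{\lambda}\mathcal{F}^2(D_af_\lambda)=2^{2n+1}$ for all nonzero $a$. I do not expect a genuine obstacle: the computation is the standard Walsh/orthogonality manipulation, and the only points requiring a little care are the justification that each $n_b$ is even via the pairing $x\mapsto x+a$, and the recognition that $n_b\in\{0,2\}$ is exactly what forces the equality case; both are elementary once the identity $\sum_\lambda\mathcal{F}^2(D_af_\lambda)=2^n N(a)$ is in place.
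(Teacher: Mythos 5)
Your proof is correct and complete: the orthogonality computation giving $\sum_{\lambda}\mathcal{F}^2(D_af_\lambda)=2^nN(a)$, the pairing $x\mapsto x+a$ forcing each $n_b$ to be even, and the equality analysis $n_b\in\{0,2\}$ are all sound. Note that the paper itself does not prove this theorem but imports it from reference \cite{BCCL06}; your argument is essentially the standard one given there, so there is nothing to reconcile.
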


\begin{lem} \label{lemma}
$F^\prime(x) = L_1(x^3)+L_2(x^9)$ is an APN function if and only if for any $a\in\mathbb{F}_{2^n}^*$ there exists one and only one $\lambda\in\mathbb{F}_{2^n}^*$ such that 
$\textit{Tr}(\lambda L_1(ax^2+a^2x)+\lambda L_2(ax^8+a^8x))$ is constantly 0.
\end{lem}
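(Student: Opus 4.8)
The plan is to combine the elementary difference computation underlying Lemma~\ref{1} with the spectral characterization of APN-ness furnished by the BCCL06 theorem stated just above. First I would compute the derivative $D_aF'$ explicitly. Since $x\mapsto x^2$ and $x\mapsto x^8$ are $\mathbb{F}_2$-linear, one has $(x+a)^3+x^3=ax^2+a^2x+a^3$ and $(x+a)^9+x^9=ax^8+a^8x+a^9$, so that
\[
D_aF'(x)=L_1(ax^2+a^2x)+L_2(ax^8+a^8x)+F'(a),
\]
where $F'(a)=L_1(a^3)+L_2(a^9)$ does not depend on $x$. Hence, for each $\lambda\in\mathbb{F}_{2^n}$,
\[
D_af_\lambda(x)=\textit{Tr}\big(\lambda L_1(ax^2+a^2x)+\lambda L_2(ax^8+a^8x)\big)+\textit{Tr}(\lambda F'(a)).
\]

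The key observation, which I expect to be the crux of the argument, is that $x\mapsto L_1(ax^2+a^2x)+L_2(ax^8+a^8x)$ is $\mathbb{F}_2$-linear in $x$, so that
\[
g_{a,\lambda}(x):=\textit{Tr}\big(\lambda L_1(ax^2+a^2x)+\lambda L_2(ax^8+a^8x)\big)
\]
is a \emph{linear} Boolean function of $x$, not merely a quadratic one. Therefore $D_af_\lambda$ is an affine Boolean function: if $g_{a,\lambda}\equiv0$ then $\mathcal{F}(D_af_\lambda)=\pm2^{n}$, and otherwise $\mathcal{F}(D_af_\lambda)=0$. In either case $\mathcal{F}^2(D_af_\lambda)\in\{0,2^{2n}\}$, and summing over $\lambda$ gives
\[
\sum_{\lambda\in\mathbb{F}_{2^n}}\mathcal{F}^2(D_af_\lambda)=2^{2n}\cdot\#\{\lambda\in\mathbb{F}_{2^n}:g_{a,\lambda}\equiv0\}.
\]

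Now I would invoke the BCCL06 theorem: $F'$ is APN if and only if the left-hand side above equals $2^{2n+1}$ for every nonzero $a$, i.e.\ if and only if $\#\{\lambda:g_{a,\lambda}\equiv0\}=2$ for every $a\in\mathbb{F}_{2^n}^*$. Since $\lambda=0$ always satisfies $g_{a,0}\equiv0$, and since $\{\lambda:g_{a,\lambda}\equiv0\}$ is an $\mathbb{F}_2$-subspace of $\mathbb{F}_{2^n}$ (the map $\lambda\mapsto g_{a,\lambda}$ being additive), the condition $\#\{\lambda:g_{a,\lambda}\equiv0\}=2$ is equivalent to the existence of exactly one nonzero $\lambda$ with $g_{a,\lambda}\equiv0$. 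This is precisely the asserted condition, and the lemma follows. The only genuinely delicate point is the linearity-in-$x$ remark, which is what forces $\mathcal{F}^2(D_af_\lambda)$ into the two values $0$ and $2^{2n}$; everything else is bookkeeping, together with the fact that the BCCL06 lower bound already guarantees the count is at least $2$, so that ``exactly one nonzero $\lambda$'' is the only way the required equality can hold.
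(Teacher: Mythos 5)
Your proof is correct and takes essentially the same route as the paper's: both compute $D_aF'$ explicitly, observe that $D_af_\lambda$ is affine so that $\mathcal{F}^2(D_af_\lambda)\in\{0,2^{2n}\}$, and then apply the BCCL06 characterization to reduce APN-ness to the count $\#\{\lambda: g_{a,\lambda}\equiv0\}=2$, with $\lambda=0$ always contributing one element. The only cosmetic difference is that you derive affineness from the explicit linearity of $g_{a,\lambda}$ in $x$, while the paper invokes the general fact that derivatives of quadratic components are affine or constant; the conclusions coincide.
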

\begin{proof}
Since $F^\prime$ is a quadratic function, every component has at most  algebraic degree 2 and, consequently, the Boolean function $D_a f^\prime_\lambda$ can be either affine or constant.
If $D_af^\prime_\lambda$ is affine then $\mathcal{F}(D_af^\prime_\lambda)=0$.
In the other case we have $\mathcal{F}(D_af^\prime_\lambda)=\pm2^n$ and $\mathcal{F}^2(D_af_\lambda)=2^{2n}$.
Let consider  the set
\begin{equation}\label{Delta}
\Delta_a=\{ \lambda\in\mathbb{F}_{2^n} :\ D_a f^\prime_\lambda \mbox{ is constant}\},
\end{equation}
  then
$$\sum_{\lambda\in\mathbb{F}_{2^n}}\mathcal{F}^2(D_af^\prime_\lambda) = 2^{2n}\cdot|\Delta_a|.$$
From the previous theorem we have that $F'$ is APN if and only if the sum is equal to $2^{2n+1}$, hence if and only if $|\Delta_a|=2$.
Since $f^\prime_0$, and consequently $D_a f^\prime_0$, is the constantly null function, we have that 0 belongs to the set $\Delta_a$ for every $a\neq0$.
Therefore, $F^\prime$ is APN if and only if $|\Delta^*_a|=1$, with
\begin{equation} \label{Delta*}
\Delta^*_a = \Delta_a \setminus \{0\}.
\end{equation}
This is true for every generic quadratic APN function $F^\prime(x)$.\\
In our specific case we have

$\begin{array}{ll}
 D_af^\prime_\lambda(x)&= \textit{Tr}(\lambda[F'(x)+F'(x+a)])=\\
& \textit{Tr}(\lambda[L_1(ax^2+a^2x+a^3)+\\
&\qquad +L_2(ax^8+a^8x+a^9)])=\\
&  \textit{Tr}(\lambda[L_1(ax^2+a^2x)+L_2(ax^8+a^8x)])+\\
&\qquad+\textit{Tr}(\lambda[L_1(a^3)+L_2(a^9)]).
\end{array}$

 In order to study its constant conditions it is sufficient to study the function $g(x)=\textit{Tr}(\lambda[L_1(ax^2+a^2x)+L_2(ax^8+a^8x)])$.
Since $g(0)=0$, we have that if $g$ is constant then it is the constant zero function and this concludes the proof.
\end{proof}

\begin{oss}
Equivalently, we can study the conditions for $\textit{Tr}(\lambda L_1(a^3[x^2+x])+\lambda L_2(a^9[x^8+x]))$ to be the constant null function.
Due to the property of the Trace function we can study the conditions for $\lambda L_1(a^3[x^2+x])+\lambda L_2(a^9[x^8+x])$ to be equal to $\eta+\eta^2$, with $\eta=\eta(a,\lambda,x)$.
\end{oss}

Recalling the notation used in the proof we defined:\\
$\Delta_a$ as in \eqref{Delta} and  $\Delta_a^*$ as in \eqref{Delta*};
\begin{equation}\label{Vl}
V_\lambda = \{ a\in\mathbb{F}_{2^n} : D_a f^\prime_\lambda \mbox{ is constant}\};
\end{equation}
\begin{equation} \label{Vl*}
V_\lambda^* = V_\lambda \setminus \{0\}.
\end{equation}

From Proposition 1 in \cite{CCK08} we get that the dimension of the kernel of $f_\lambda$ and $n$ have the same parity, 
where the kernel of a quadratic form $f$ is the subspace of $\mathbb{F}_{2^n}$ \\
$\{u\in\mathbb{F}_{2^n} : f(u+v)+f(u)+f(v)=0 \ \mbox{for any }v\in\mathbb{F}_{2^n}\}$.
From Lemma 1 in \cite{CCK08} we get that, since $f_\lambda$ is a quadratic Boolean function, its kernel corresponds to the subspace $V_\lambda$.
Therefore we have dim$_{\mathbb{F}_2}(V_\lambda) \equiv$ $n$ mod 2.

Moreover let's consider the set
\begin{equation*}
\Gamma_i=\{ \lambda\neq 0 : \mbox{dim}(V_\lambda)=i\}.
\end{equation*}
If $\Gamma_i$ not empty then $i$ has the same parity as $n$.
It can be easily proved by considering  a not null element $\lambda$ in the set, i.e. such that dim$V_\lambda=i$.
Since the dimension of $V_\lambda$ has the same parity as $n$, the same can be state on $i$. \\
The set $\Gamma_0$ correspond to the set of all bent components.

\begin{cor}
From  Lemma \ref{lemma} it is straightforward to prove that APN property for a quadratic function is equivalent to the following statement:
$\mbox{for\ any}\ \lambda_1\neq \lambda_2 \in\mathbb{F}^*_{2^n},$ $$\ V_{\lambda_1}\cap V_{\lambda_2} = \emptyset\ \mbox{ and }\ \sum_{\lambda\neq0}|V_\lambda^*|=2^n-1.$$
\end{cor}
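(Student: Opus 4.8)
The plan is to translate the reformulation of the APN property given by Lemma~\ref{lemma} into the dual language of the sets $V_\lambda$, and then to recognise the resulting condition as saying that the sets $V_\lambda^*$, $\lambda\in\mathbb{F}_{2^n}^*$, form a partition of $\mathbb{F}_{2^n}^*$.

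First I would record the elementary duality between the two families $\{\Delta_a\}$ and $\{V_\lambda\}$: straight from the definitions \eqref{Delta} and \eqref{Vl}, for any $a,\lambda\in\mathbb{F}_{2^n}$ one has $a\in V_\lambda$ if and only if $\lambda\in\Delta_a$, since both statements mean that $D_a f'_\lambda$ is constant. Restricting to nonzero arguments, $a\in V_\lambda^*\iff\lambda\in\Delta_a^*$, and hence for every $a\neq0$
$$|\Delta_a^*| \;=\; \big|\{\lambda\in\mathbb{F}_{2^n}^* : a\in V_\lambda^*\}\big|.$$
In the proof of Lemma~\ref{lemma} it is shown that $F'$ is APN if and only if $|\Delta_a^*|=1$ for every $a\in\mathbb{F}_{2^n}^*$; combining this with the displayed identity, $F'$ is APN if and only if for every $a\in\mathbb{F}_{2^n}^*$ there is exactly one $\lambda\in\mathbb{F}_{2^n}^*$ with $a\in V_\lambda^*$. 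Equivalently, $F'$ is APN precisely when $\{V_\lambda^*\}_{\lambda\in\mathbb{F}_{2^n}^*}$ is a partition of $\mathbb{F}_{2^n}^*$, empty blocks allowed (a bent component, i.e.\ $\lambda\in\Gamma_0$, contributes $V_\lambda^*=\emptyset$).

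Next I would split this \emph{partition property} into its two ingredients and match them with the two displayed conditions. The blocks are pairwise disjoint exactly when $V_{\lambda_1}^*\cap V_{\lambda_2}^*=\emptyset$ for all $\lambda_1\neq\lambda_2$ in $\mathbb{F}_{2^n}^*$; since $0\in V_\lambda$ for every $\lambda$ (as $D_0 f'_\lambda\equiv0$), this is the intended meaning of the condition $V_{\lambda_1}\cap V_{\lambda_2}=\emptyset$ in the statement. Given this disjointness, all blocks lie in $\mathbb{F}_{2^n}^*$ and the counting reduces to $\sum_{\lambda\neq0}|V_\lambda^*| = \big|\bigcup_{\lambda\neq0}V_\lambda^*\big|\le 2^n-1$, with equality if and only if $\bigcup_{\lambda\neq0}V_\lambda^*=\mathbb{F}_{2^n}^*$, i.e.\ if and only if every $a\neq0$ lies in at least one block. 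Thus the conjunction of the two displayed conditions is equivalent to ``the $V_\lambda^*$ are pairwise disjoint and cover $\mathbb{F}_{2^n}^*$'', which is the partition property, hence equivalent to the APN property; conversely, APN-ness of $F'$ yields both conditions at once.

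I do not expect a genuine obstacle, as the corollary is indeed ``straightforward'' from Lemma~\ref{lemma}. The points that need some care are: making the $\Delta_a^*\leftrightarrow V_\lambda^*$ duality explicit and using the $|\Delta_a^*|=1$ form of Lemma~\ref{lemma} rather than its verbatim statement; the bookkeeping around the element $0$, so that the intersection condition is really a statement about the starred sets; and noting that neither displayed condition suffices by itself — the identity $\sum_{\lambda\neq0}|V_\lambda^*|=2^n-1$ forces a covering of $\mathbb{F}_{2^n}^*$ only once disjointness is assumed, so both conditions are genuinely needed in the statement.
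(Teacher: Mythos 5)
Your argument is correct and is exactly the intended one: the paper offers no proof beyond declaring the corollary straightforward from Lemma~\ref{lemma}, and your route via the duality $a\in V_\lambda\iff\lambda\in\Delta_a$, the reformulation $|\Delta_a^*|=1$, and the partition reading of the two displayed conditions is the natural way to fill that in. Your observation that the intersection condition must be read on the starred sets (since $0\in V_{\lambda_1}\cap V_{\lambda_2}$ always) correctly identifies and repairs a small imprecision in the statement itself.
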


\subsubsection{Computational Results}
Using the software MAGMA we tried to verify for functions $F^\prime$ of form \eqref{F'} defined over small dimensions the relation between the APN property and the number of bent components.
From the results obtained taking random linear functions $L_1, L_2$ and constructing $F^\prime$ for $n\in\{4,6,8\}$ the relation seems the following one:
\begin{con}
For an even $n$, a function $F^\prime$ of the form \eqref{F'} is APN if and only if it has exactly $\frac{2}{3}(2^n-1)$ bent components.
\end{con}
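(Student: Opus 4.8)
The plan is to restate everything in terms of the subspaces $V_\lambda$ of \eqref{Vl} and then to exploit a symmetry of $F^\prime$ coming from the inner cube. Since $n$ is even we have $\mathbb{F}_4\subseteq\mathbb{F}_{2^n}$; if $\omega$ is a primitive cube root of unity then $\omega^3=\omega^9=1$ gives $F^\prime(\omega x)=L_1(\omega^3x^3)+L_2(\omega^9x^9)=F^\prime(x)$ for every $x$, so $F^\prime$ is invariant under multiplication by $\mathbb{F}_4^*$. Hence every component $f^\prime_\lambda$ is $\mathbb{F}_4^*$-invariant, and so is the bilinear form $B_\lambda(x,y)=f^\prime_\lambda(x+y)+f^\prime_\lambda(x)+f^\prime_\lambda(y)$ in the sense that $B_\lambda(\omega x,\omega y)=B_\lambda(x,y)$; consequently the radical $V_\lambda$ of $B_\lambda$ is stable under multiplication by $\omega$, i.e.\ it is an $\mathbb{F}_4$-subspace of $\mathbb{F}_{2^n}$. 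In particular $\dim_{\mathbb{F}_2}V_\lambda$ is even (which re-proves the parity of $\dim_{\mathbb{F}_2}V_\lambda$ in the case at hand) and $|V_\lambda^*|\in\{0,3,15,63,\dots\}$, a component being bent precisely when $V_\lambda^*=\emptyset$.

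I would then establish the cheap half of the equivalence. If $F^\prime$ is APN, the Corollary to Lemma~\ref{lemma} says that $\{V_\lambda^*:\lambda\neq0\}$ partitions $\mathbb{F}_{2^n}^*$, so $\sum_{\lambda\neq0}|V_\lambda^*|=2^n-1$; since every non-bent summand is at least $3$, the number of non-bent components is at most $\tfrac13(2^n-1)$, hence the number of bent components is at least $\tfrac23(2^n-1)$, with equality exactly when every non-trivial $V_\lambda$ is $2$-dimensional over $\mathbb{F}_2$. For the opposite inequality one uses that $\bigcup_{\lambda\neq0}V_\lambda^*=\mathbb{F}_{2^n}^*$ for every quadratic function (the theorem of \cite{BCCL06} gives $|\Delta_a|\geq2$, hence $\Delta_a^*\neq\emptyset$, for all $a\neq0$): if $F^\prime$ has $\tfrac23(2^n-1)$ bent components it has $\tfrac13(2^n-1)$ non-bent ones, and \emph{provided} each corresponding $V_\lambda^*$ has size exactly $3$, these $\tfrac13(2^n-1)$ three-element sets cover the $2^n-1$ nonzero elements, so they must be pairwise disjoint; then $\{V_\lambda^*\}$ is a partition and $F^\prime$ is APN by the Corollary.

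Thus the whole conjecture is equivalent to the following single claim: \emph{for $F^\prime$ of the form \eqref{F'} over $\mathbb{F}_{2^n}$ with $n$ even, if $F^\prime$ is APN — or, for the converse, if it attains $\tfrac23(2^n-1)$ bent components — then $\dim_{\mathbb{F}_2}V_\lambda\leq 2$ for every $\lambda$.} To attack this I would make $V_\lambda$ explicit. Using the adjoints $L_1^*,L_2^*$ and writing $p=L_1^*(\lambda)$, $q=L_2^*(\lambda)$, a short computation gives $f^\prime_\lambda(x)=\textit{Tr}(px^3+qx^9)$ and $B_\lambda(a,y)=\textit{Tr}\!\big(y(a^2p+(ap)^{2^{n-1}}+a^8q+(aq)^{2^{n-3}})\big)$, so $V_\lambda$ is the set of zeros in $\mathbb{F}_{2^n}$ of the linearized polynomial $q^8a^{64}+p^8a^{16}+p^4a^4+qa$ (exponents reduced modulo $2^n-1$). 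One then has to rule out a $V_\lambda$ of $\mathbb{F}_2$-dimension $4$ or $6$. The intended mechanism is that two $\mathbb{F}_4$-independent vectors inside one $V_{\lambda_0}$ can be fed back into the APN criterion of Lemma~\ref{1} (equivalently: $F^\prime(v)+F^\prime(v+a)+F^\prime(a)=0\Rightarrow v\in\{0,a\}$) to produce a direction $a$ for which $D_aF^\prime$ is more than $2$-to-$1$, contradicting APN; for the other direction one argues instead that such a $V_{\lambda_0}$ forces $\sum_{\lambda\neq0}|V_\lambda^*|>2^n-1$ and hence strictly more than $\tfrac23(2^n-1)$ bent components.

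The main obstacle is exactly this last step: bounding $\dim_{\mathbb{F}_2}V_\lambda$ uniformly in $\lambda$ and over all admissible $L_1,L_2$. The polynomial $q^8a^{64}+p^8a^{16}+p^4a^4+qa$ has $2$-degree at most $6$, and genuinely $6$ for suitable $p,q$ once $n\geq 6$ — indeed for such parameters there are non-APN choices of $L_1,L_2$ producing affine components ($V_\lambda=\mathbb{F}_{2^n}$) — so the content of the conjecture is precisely that the global APN (or maximal-bent-count) hypothesis collapses this local degeneracy to dimension $\leq 2$. For $n=4$ the exponent reduction ($a^{64}\equiv a^4$, $a^{16}\equiv a$) forces $2$-degree $\leq 2$, which is why the statement is transparent there; in general, however, turning the one-parameter failure of the APN criterion into a contradiction with a high-dimensional $V_{\lambda_0}$ appears to require a finer grip on how the coefficients of $L_1,L_2$ interact with the cube structure than is currently available, and this is why the result is stated here only as a conjecture.
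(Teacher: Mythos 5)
The statement you are addressing is presented in the paper only as a conjecture: the paper gives no proof, just computational evidence for $n\in\{4,6,8\}$ with random choices of $L_1,L_2$, plus the Dillon function of \cite{Dillon06} as a witness that the claim fails for general quadratic APN maps. So there is no "paper proof" to match, and to your credit you do not pretend to close the question. What you do establish is correct and genuinely clarifying: since $\omega^3=1$ for a primitive cube root of unity $\omega\in\mathbb{F}_4\subseteq\mathbb{F}_{2^n}$, one has $F^\prime(\omega x)=F^\prime(x)$, hence every $V_\lambda$ of \eqref{Vl} is an $\mathbb{F}_4$-subspace and $|V_\lambda^*|\in\{0,3,15,\dots\}$; combined with the corollary following Lemma \ref{lemma} (and the covering property $\bigcup_{\lambda\neq0}V_\lambda^*=\mathbb{F}_{2^n}^*$, which does follow from the cited theorem of \cite{BCCL06}) this gives the one-sided bound that an APN $F^\prime$ of form \eqref{F'} has at least $\frac{2}{3}(2^n-1)$ bent components, and it reduces both implications of the conjecture to the single claim $\dim_{\mathbb{F}_2}V_\lambda\leq 2$ for all $\lambda$. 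This explains the specific constant $\frac{2}{3}$ in a way the paper does not.

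That reduction, however, is where the argument stops, and the residual claim is precisely the hard content of the conjecture. Your own description of $V_\lambda$ as the kernel of $q^8a^{64}+p^8a^{16}+p^4a^4+qa$ with $p=L_1^*(\lambda)$, $q=L_2^*(\lambda)$ shows that this kernel can a priori have $\mathbb{F}_2$-dimension up to $6$, and nothing you prove excludes an APN $F^\prime$ of form \eqref{F'} having one component of kernel dimension $4$ compensated by fewer non-bent components: that configuration is exactly what the Dillon map realizes ($46=63-17$ bent components, forcing one kernel of dimension $4$ and sixteen of dimension $2$ in the partition $\sum_{\lambda\neq0}|V_\lambda^*|=63$), so it is consistent with APN-ness \emph{per se} and can only be ruled out by using the special shape of the quadrinomial above, i.e.\ the interaction of the coefficients of $L_1,L_2$ with the exponents $3$ and $9$. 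The ``intended mechanism'' you sketch --- feeding two $\mathbb{F}_4$-independent elements of a large $V_{\lambda_0}$ into the criterion of Lemma \ref{1} to contradict APN-ness --- cannot work as stated, because a single large kernel does not violate the partition condition; some new structural input is required and none is supplied. The proposal is therefore a correct partial reduction with a clearly identified, unfilled gap, not a proof; the statement remains a conjecture after it, as it is in the paper.
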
 

We know that this is not true for general quadratic functions.
Indeed consider the quadratic APN function presented by Dillon in 2006 \cite{Dillon06}
$$
F(x)=x^3+u^{11}x^5+u^{13}x^9+x^{17}+u^{11}x^{33}+x^{48};
$$\\
defined over $\mathbb{F}_{2^6}$ where $u$ is a primitive element, root of the polynomial $x^6+x^4+x^3+x+1$.
This function has 46 bent components and $46>\frac{2}{3}(2^6-1)=42$.

\section{Conclusion}
In this work we continued the study of quadratic functions of the form $L_1(x^3)+L_2(x^9)$ and related APN conditions.
New necessary and sufficient conditions are presented in this paper.
Such conditions allow us to compute a faster algorithm that checks the existence of other APN functions of such form.
New results are given considering functions of the form $x^9+L(x^3)$.
Up to CCZ-equivalence new APN functions are found in different low dimensions.

\end{document}